\theoremstyle{plain}
\newtheorem{lemma}{Lemma}[section]
\newtheorem{proposition}{Proposition}[section]
\newtheorem{theorem}{Theorem}[section]
\theoremstyle{definition}
\newtheorem{definition}{Definition}[section]
\newcommand{\As}{\mathscr{A}}
\newcommand{\Bs}{\mathscr{B}}
\newcommand{\B}{\mathbf{B}}
\newcommand{\id}{\mathrm{id}}
\newcommand{\Set}{\mathbf{Set}}
\newcommand{\RStr}{\mathbf{RStr}}
\newcommand{\Pred}{\mathrm{Pred}}
\newcommand{\Stat}{\mathrm{Stat}}
\title{Effectus of Quantum Probability on Relational Structures}
\author{
 Octavio Zapata \
\institute{Department of Computer Science\\University College London}
\email{ocbzapata@gmail.com}
}
\begin{document}
\maketitle

\begin{abstract}
 The notion of effectus  from categorical logic is relevant in the emerging field of categorical probability theory. In this context, stochastic maps are represented by maps in the Kleisli category of some probability monad. Quantum homomorphisms from combinatorics and quantum information theory are the Kleisli maps of certain sort of quantum monad. We show that the Kleisli category of this quantum monad is an effectus. This gives rise to notions of quantum validity and conditioning. 
\end{abstract}

\section{Introduction}
A \emph{graph} $G$ consists of a set of vertices $V(G)$ and a set of edges $E(G)\subseteq V(G)\times V(G)$. 
By definition $E(G)$ is a binary relation on the vertex set $V(G)$. We write $v\sim v'$ to denote a pair of adjacent vertices $v,v'\in V(G)$, \textit{i.e.} a pair $(v,v')\in V(G)\times V(G)$ in the edge relation $(v,v')\in E(G)$.  
A  \emph{graph homomorphism} $G \to H$ is given by a function $f\colon V(G) \to V(H)$ between vertices preserving edges: 
\[v \sim v' ~\text{ in }~G\quad\Rightarrow\quad f(v) \sim f(v')~\text{ in }~ H\]
 %Maps of graphs are also known as \emph{graph homomorphisms}.

Consider the following game involving a given pair of graphs $G$ and $H$, played by Alice and Bob playing against a Verifier. Their goal is to establish the existence of a graph homomorphism from $G$ to $ H$. The game is `non-local' which means that Alice and Bob are not allowed to communicate during the game, however they are allowed to agree on a strategy before the game has started.  In each round Verifier sends to Alice and Bob vertices $v_1, v_2 \in V(G)$, respectively; in response they produce outputs $w_1, w_2 \in V(H)$. They win the round if the following conditions hold:
\[v_1 = v_2 \Rightarrow w_1 = w_2 \qquad \text{ and } \qquad v_1 \sim v_2 \Rightarrow w_1 \sim w_2\] 

If there is indeed a graph homomorphism $G \to H$, then Alice and Bob can win any round of the game described above by using such homomorphism as strategy for responding accordingly. Conversely, they can win any round with certainty only when there is a graph homomorphism $G \to H$. A strategy for Alice and Bob in which they win with probability 1 is called a \emph{perfect strategy}. Hence, the existence of a perfect strategy is equivalent to the existence of a graph homomorphism.

 In cases where no classical homomorphism exists, one can use quantum resources in the form of a maximally entangled bipartite state, where Alice and Bob can each perform measurements on their part, to construct perfect strategies. These strategies are called quantum because they use quantum resources.

 We write $M_d(\mathbb{C})$ for the set of all $d\times d$ matrices with complex entries ($d\geq 1$), and $\mathbbm{1}\in M_d(\mathbb{C})$ for the $d\times d$ identity matrix. Let $E\in M_n(\mathbb{C})$ and $F\in M_m(\mathbb{C})$ be two complex square matrices, $n,m\geq1$. Their \emph{tensor product}  is the matrix defined as $E\otimes F:=( e_{ij}F)\in M_{nm}(\mathbb{C})$ if $E=(e_{ij})$ with $i,j=1,\dots,n$. 

\begin{definition} A \emph{quantum perfect strategy} for the homomorphism game from $G$ to $H$ consists of a complex unitary vector $\psi \in \mathbb{C}^{d_A}\otimes \mathbb{C}^{d_B}$ for some $d_A, d_B\geq 1$ finite, and families $( E_{vw} )_{w \in V(H)}$ and $( F_{vw} )_{w \in V(H)}$ of $d_A\times d_A$ and $d_B\times d_B$ complex matrices for all $v\in V(G)$, satisfying:
\begin{itemize}
\item[(1)] $\sum_{w\in V(H)} E_{vw} = \mathbbm{1} \in M_{d_A}(\mathbb{C})$ and  $\sum_{w\in V(H)} F_{vw} = \mathbbm{1} \in M_{d_B}(\mathbb{C})$;
\item[(2)] $w\neq w' \quad \Rightarrow \quad \psi^{\ast} (E_{vw}\otimes F_{vw'})\psi = 0$;
\item[(3)] $v \sim v' \land w\not\sim w'\quad \Rightarrow \quad\psi^{\ast} (E_{vw}\otimes F_{v'w'})\psi = 0$.
\end{itemize}
\end{definition}

Observe that the definition of quantum perfect strategies forgets  the two-person aspect of the game and shared state, leaving a matrix-valued relation as the witness for existence of a quantum perfect strategy. Recall perfect strategies are in bijection with graph homomorphisms.  This gives rise to 
the notion of quantum graph homomorphism. This concept was introduced in \cite{roberson2013variations}, as a generalisation of the notion of quantum chromatic number from \cite{cameron2007quantum}. Analogous results for constraint systems are proved in \cite{cleve2014characterization, manvcinska2016quantum, abramsky2017quantum, atserias2018quantum}. 

\begin{definition} \label{def:qgraph} A \emph{quantum graph homomorphism} from $G$ to $H$ is an indexed family 
$( E_{vw} )_{v \in V(G), w \in V(H)}$ of $d\times d$ complex matrices, $E_{vw} \in M_d(\mathbb{C})$, for some $d\geq 1$, such that:
\begin{itemize}
\item[(1)] $E^{\ast}_{vw}=E^{2}_{vw}=E_{vw}$ for all $v\in V(G)$ and $w\in V(H)$;
\item[(2)] $\sum_{w\in V(H)} E_{vw} = \mathbbm{1} \in M_d(\mathbb{C})$ for all $v\in V(G)$;
\item[(3)] $(v=v'\land w \neq w') \lor( v \sim v' \land w \not\sim w')\quad \Rightarrow \quad E_{vw} E_{v'w'} = 0$.
\end{itemize}
\end{definition}

An important further step is taken in~\cite{roberson2013variations}: a construction $G \mapsto \mathsf{M} G$ on graphs is introduced, such that the existence of a quantum graph homomorphism from $G$ to $H$ is equivalent to the existence of a graph homomorphism of type $G \to  \mathsf{M} H$. This construction is called the \emph{measurement graph}, and it turns out to be a graded monad on the category of graphs. The Kleisli morphisms of this monad are exactly the quantum homorphism between graphs of \cite{roberson2013variations, manvcinska2016quantum, atserias2018quantum}. One can show equivalence between these three different notions: quantum homomorphisms, quantum perfect strategies,  and certain kind of (classical) homomorphisms between graphs \cite{abramsky2017quantum}.

 Monads are used in formal semantics of functional and probabilistic programming languages. Building on the work of Giry~\cite{giry1982categorical}, and inspired by algebraic methods in program semantics, the study of various `probability' monads has evolved and became part of a new branch of categorical logic called effectus theory. The main goal of effectus theory is to describe the salient aspects of quantum computation and logic using the language of category theory. This description includes probabilistic and classical logic and computation as special cases~\cite{cho2015introduction, jacobs2015new}. Quantum perfect strategies form a category.  In this paper we shall see that the category of  quantum perfect strategies, or quantum graph homomorphisms, is an effectus (see Theorem~\ref{thm:effKlQ}).  However, specialisation of Theorem~\ref{thm:effKlQ} to the case where the underlying category is the category simple undirected graphs (which is the context used in $\cite{roberson2013variations, manvcinska2016quantum}$) is impossible as this category does not have a terminal object. This is why we introduce the quantum monad in the context of the more general notion of relational structures rather than that of simple graphs.

\section{Preliminaries}
\label{pre}
 Disjoint union is the coproduct in the category $\Set$ of sets and functions. 
 The
\emph{disjoint union} of two sets $X,Y$ is the defined to be the set  \[X+Y:=\{(x,1): x\in X\}\cup\{(y,2): y\in Y\}.\] 
One can define two functions $X\overset{\kappa_1}{\rightarrow}X+Y\overset{\kappa_2}{\leftarrow} Y$, as $\kappa_1(x):=(x,1)$ and $\kappa_2:=(y,2)$, for all $x\in X$ and $y\in Y$, respectively. The functions $\kappa_1, \kappa_2$ are called \emph{coprojections}. 
For any pair of functions $X\overset{p}{\rightarrow} Z \overset{q}{\leftarrow} Y$, the function $[p,q]\colon X+Y\to Z$ called \emph{cotupling} is given by:
\[
[p,q](v):=
\begin{cases}
p(v)	&~v\in X \\
q(v)	&~v\in Y
\end{cases}\qquad (v\in X+Y)
\]
 If  $f\colon A \to B,~g\colon X\to Y$ are functions, then the function $f + g\colon A+X \to B+Y$ is defined as \[f+g:=[\kappa_1 \circ f,~\kappa_2 \circ g].\]

 The empty set $0:=\emptyset$ is the initial object of $\Set$, and any choice of a singleton set $1:=\{\ast\}$ is terminal in $\Set$. The unique function $!_X\colon X\to 1$ is given by $x\mapsto \ast$ for each $x\in X$. Hence, the category $\Set$ has finite coproducts $(+,0)$ and a terminal object $1$.
 
 \begin{definition} \label{def:eff}
An \emph{effectus} is category $\mathbf{B}$ with finite coproducts $(+,0)$ and a terminal object $1$, such that for all $X,Y$ objects of $\B$, the following commutative squares are pullbacks:
\[\label{eff:pbcks}\xymatrix@C=30pt@R=20pt{
X + Y \ar[r]^{!_X + \id_Y} \ar[d]_{\id_X + !_Y} & 1 + Y\ar[d]^{\id_1 + !_Y} && X \ar[r]^{!_X} \ar[d]_{\kappa_1}& 1\ar[d]^{\kappa_1} \\ 
X + 1 \ar[r]_{!_X + \id_1} & 1 +1 && X+Y\ar[r]_{!_X + !_Y}&1+1} \]
and the following maps in $\B$ are jointly monic:
\[\label{eff:joint}\xymatrix@C=80pt@R=20pt{
(1 + 1) + 1 \ar@/^/[r]^-{\gamma_1:=[[\kappa_1,\kappa_2], \kappa_2]}\ar@/_/[r]_-{\gamma_2:=[[\kappa_2,\kappa_1], \kappa_2]} & 1 + 1
}\]
Joint monicity of  $\gamma_1, \gamma_2$ means that given maps $f,g\colon X\to (1+1)+1$, we have: \[\gamma_1 \circ f= \gamma_1 \circ g~\wedge~\gamma_2 \circ f= \gamma_2 \circ g \quad \Rightarrow \quad f = g\].
\end{definition}

The category $\Set$ is the effectus used for modelling classical (deterministic, Boolean) computations. The following result is well-known (see, \emph{e.g.}~\cite[Example 4.7]{jacobs2015new}). 

\begin{theorem}\label{prop:set-eff}
 The category $\Set$ is an effectus. 
\end{theorem}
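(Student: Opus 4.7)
The plan is to verify directly that $\Set$ satisfies each requirement from Definition~\ref{def:eff}, relying on the explicit description of coproducts as disjoint unions and the fact that pullbacks in $\Set$ are fibred products.

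First, I would handle the two pullback squares. Recall that a commutative square in $\Set$ is a pullback exactly when the canonical comparison map sends its corner into the fibred product bijectively. For the left square, I would unfold the maps pointwise: an element of $X+1$ maps to the left copy of $1$ in $1+1$ iff it comes from $X$, and to the right copy iff it is $\ast$; symmetrically for $1+Y$. So given $z \in Z$ together with $f(z) \in X+1$ and $g(z) \in 1+Y$ whose images agree in $1+1$, exactly one of two cases holds: $f(z) \in X$ and $g(z) = \kappa_1(\ast)$, or $f(z) = \kappa_2(\ast)$ and $g(z) \in Y$. In each case the data assemble into a unique element of $X+Y$, giving the required mediating function. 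For the right square, the image of $g(z) \in X+Y$ under $!_X + !_Y$ lies in the left copy of $1$ iff $g(z) \in X$, and the compatibility with $\kappa_1 \circ f$ forces this, so $g$ factors uniquely through $\kappa_1\colon X \to X+Y$.

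Second, for joint monicity of $\gamma_1$ and $\gamma_2$, I would write out $(1+1)+1$ as a concrete three-element set with elements $a := \kappa_1\kappa_1(\ast)$, $b := \kappa_1\kappa_2(\ast)$, and $c := \kappa_2(\ast)$, and compute directly from the defining cotuples that
\[\gamma_1(a)=\kappa_1(\ast),\quad \gamma_1(b)=\kappa_2(\ast),\quad \gamma_1(c)=\kappa_2(\ast),\]
\[\gamma_2(a)=\kappa_2(\ast),\quad \gamma_2(b)=\kappa_1(\ast),\quad \gamma_2(c)=\kappa_2(\ast).\]
The three pairs $(\gamma_1(x),\gamma_2(x))$ for $x\in\{a,b,c\}$ are pairwise distinct, so the pairing $\langle \gamma_1,\gamma_2\rangle\colon (1+1)+1 \to (1+1)\times(1+1)$ is injective, which is precisely joint monicity since two functions $f,g\colon X\to(1+1)+1$ with $\gamma_i\circ f=\gamma_i\circ g$ produce pointwise equal pairs.

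None of the steps presents a genuine obstacle; the only mild care needed is bookkeeping of which copy of $1$ each coprojection lands in, especially in the joint-monicity calculation where the distinction between $\gamma_1$ and $\gamma_2$ lives entirely in how the inner $1+1$ is routed. The proof is therefore a matter of case analysis on summands, and no deeper structural argument beyond the concrete form of coproducts in $\Set$ is required.
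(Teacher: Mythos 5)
Your proposal is correct and follows essentially the same route as the paper: a direct case analysis on the summands of the coproducts to verify both pullback squares, followed by an explicit computation of $\gamma_1,\gamma_2$ on the three elements of $(1+1)+1$. The only cosmetic difference is that you package joint monicity as injectivity of the pairing $\langle\gamma_1,\gamma_2\rangle$, whereas the paper argues by contradiction on a point where $f$ and $g$ would differ; the content is identical.
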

\begin{proof}
We know how pullbacks are constructed in $\Set$. For the first pullback condition 
from Definition~\ref{def:eff}, let $P$ be the set of pairs $(x,y)\in (X+1)\times (1+Y)$ such that $(!_X+\id_1)(x)=(\id_1+!_Y) (y)$. 
Note that we have: \[(X+1)\times(1+Y) \cong (X\times 1) + (1 \times 1) + (X \times Y) + (1 \times Y)\] Let $X+1=X+\{1\}$ and $1+Y= \{0\}+Y$. 
 By cases: 
 \begin{itemize}
 \item[(1)] $(x,y)\in X\times \{0\}$ implies $(x,y)=(x,0)$, and so $(!_X+\id_1)(x)= 0=(\id_1+!_Y)(0)$ for all $x\in X$, thus $X\times 1 \subseteq P$; 
 \item[(2)] $(x,y) \in \{1\}\times \{0\}$ implies $(!_X+\id_1)(1)=1\neq 0= (\id_1+!_Y) (0)$, so $1\times1 \not\subseteq P$; 
 \item[(3)] $(x,y)\in X\times Y$ implies $(!_X+\id_1)(x)\neq (\id_1+!_Y) (y)$, so $X\times Y \not\subseteq P$; 
 \item[(4)] $(x,y)\in \{1\}\times Y$ implies $(x,y)=(1,y)$, and so $(!_X+\id_1)(1)= 1=(\id_1+!_Y) (y)$ for all $y\in Y$, thus $1\times Y \subseteq P$. 
 \end{itemize}
 Hence, the pullback $P$ is indeed given by $(X\times 1)+(1 \times Y) \cong X+Y$.

For the second pullback condition %of~(\ref{eff:pbcks}) 
from Definition~\ref{def:eff}, take $1=\{0\}$ and consider the set of pairs $(w, 0)\in (X+Y) \times 1$ such that 
$(!_X+!_Y)(w)=\kappa_1(0)$.  Note that $(X+Y) \times 1\cong (X \times 1) + (Y \times 1)$. 
By cases: 
 \begin{itemize}
 \item[(1)] if $(w,0)\in X\times 1$ then $(!_X+!_Y)(w)=0=\kappa_1(0)$ for all $w\in X$;  \item[(2)] if $(w,0)\in Y\times 1$ then $(!_X+!_Y)(w)=1\neq 0 = \kappa_1(0)$ for all $w\in Y$. 
 \end{itemize}
 Thus the pullback is indeed given by $X\times 1 \cong X$.

For the joint monicity requirement from Definition~\ref{def:eff}, we consider sets $1+1+1\cong \{a,b,c\}$ and $1+1\cong \{0,1\}$, and  functions 
$\gamma_1, \gamma_2 \colon 1+1+1 \rightrightarrows 1+1$ defined as: 
\[
\gamma_1(a) = 0 \quad\quad \gamma_1(b) = \gamma_1 (c) = 1 \qquad\qquad \gamma_2(a) = \gamma_2 (c) = 1 \quad\quad \gamma_2(b) = 0
\]
Further assume we  have functions $f, g\colon X \rightrightarrows 1+1+1$  such that: \[\gamma_1 \circ f = \gamma_1 \circ g\qquad \qquad \qquad\gamma_2 \circ f = \gamma_2 \circ g\] We need to show that $f=g$. Suppose that $f\neq g$. Then $f(x)\neq g(x)$ for some $x\in X$. Assuming the existence of such $x$, we arrive to the following contradictions: 
\begin{itemize}
\item $f(x)= a  \Rightarrow g(x)\in \{b, c\} \Rightarrow  \gamma_1 (f(x))\neq \gamma_1 (g(x))$
\item $f(x)=b \Rightarrow  g(x)\in \{a, c\} \Rightarrow \gamma_2 (f(x))\neq \gamma_2 (g(x))$
\item $f(x)=c \Rightarrow g(x)\in \{a, b\}\Rightarrow  \gamma_1 (f(x))\neq \gamma_1 (g(x))$ if $g(x)=a$, or $\gamma_2 (f(x))\neq \gamma_2 (g(x))$ if $g(x)=b$
\end{itemize}
Hence it must be the case that $f=g$, and so $\gamma_1,~\gamma_2$ are jointly monic. 
\end{proof}

There are many examples of categories that are effectuses (for more, see \cite{cho2015introduction}):
\begin{itemize}
\item Topological spaces with continuous maps between them. 
\item Rings (with multiplicative identity) and ring homomorphisms.
\item Measurable spaces with measurable functions.
\item $C^{\ast}$-algebras with completely positive unital maps.
\item Extensive categories: $\B$ is extensive if it has finite coproducts and \[\B/X \times \B/Y \simeq \B/(X+Y)\] for all $X,Y$ objects of $\B$. (Every topos is extensive.)
\end{itemize}

 \section{Effectus of discrete probability measures}
For any set $X$ and any point $x\in X$, let $\mathbf{1}_x\colon X\to\{0,1\}$ denote the indicator function at $x$: 
\[\mathbf{1}_x(x'):=
\begin{cases}
1& x=x'\\
0& x\neq x
\end{cases}\qquad (x'\in X)
\]
A \emph{convex combination} of elements of the set $X$ is an expression: \[\lambda_1 \mathbf{1}_{x_1} + \cdots + \lambda_n\mathbf{1}_{x_n}\] with $n\geq 1$, $x_i\in X$, $\lambda_i\in [0,1]$, and  $\lambda_1+\cdots + \lambda_n = 1$. 
Let $D(X)$ be the set of all convex combinations of elements of $X$. 
 The elements of the set $D(X)$ are called  \emph{states} or \emph{distributions} on $X$.  The indicator function is a distribution $\mathbf{1}_x\in D(X)$, for every $x\in X$.
 A distribution $\sum_ip_i\mathbf{1}_{x_i}\in D(X)$ can be represented by a function $p\colon X \to [0,1]$ with finitely many non-zero values, satisfying: \[\sum_{x\in X}p(x)= 1\] Functions like this are called \emph{discrete probability measures}. If $\mathrm{range}(p)=\{x_1,\dots,x_n\},~n\geq 1$, then the assignment $p(x_i) \mapsto p_i$ gives the bijective correspondence between these two equivalent representations of distributions, \textit{i.e.}~as convex combinations or as discrete probability measures. 
   
Distributions on $X$ can be pushforwarded along a function $f\colon X \to Y$ to get distributions on $Y$: 
\[\sum_{x\in X} p_x\mathbf{1}_x \qquad\mapsto\qquad \sum_{x\in X} p_x\mathbf{1}_{f(x)}\]
 That is, we have a function $D (f)\colon D(X)\to D(Y)$ defined for any $p\in D(X)$ as:
  \[D (f)(p)(y):=
\sum_{x\in f^{-1}(y)}
p (x)\qquad (y\in Y)\]

 For every set $X$, the function $\eta_X\colon X \to D(X)$ is defined as: \[\eta_X (x):=\mathbf{1}_x\qquad (x\in X)\]   
 The function $\mu_X\colon D^2(X) \to D(X)$ is given by the expectation-value of evaluation functions $p\mapsto p(x)$ with respect to some distribution of distributions $P\in D^2(X)$, \textit{i.e.}~\[\mu_X(P)(x) := 
\sum_{p\in D(X)}P(p)\cdot p(x)\qquad(x\in X)\] 
The usual naturality and commutativity requirements are satisfied by these functions,  so there is a $[0,1]$-valued discrete distributions monad $D=(D, \eta,\mu)$ on $\Set$.

The Kleisli category $\mathcal{K}(D)$ of the distribution monad $D$ has sets as objects, and functions of type $X\to D(Y)$ as morphisms of type  $X\to Y$ in $\mathcal{K}(D)$. 
 The identity morphism $X\to X$ in $\mathcal{K}(D)$ is given by $\eta_X$. We define the Kleisli extension $c_{\ast}\colon D(X)\to D(Y)$ of any Kleisli morphism $c\colon X\to D(Y)$ as $c_{\ast}:=\mu_Y\circ D(c)$. That is,  for all $p\in D(X)$: \[c_{\ast}(p)(y)=\sum_{x\in X} p(x) \cdot c(x)(y)\qquad (y\in Y)\] Composition of  Kleisli maps   $c\colon X\to D(Y)$ and $d\colon Y\to D(Z)$, is given using Klesli extension (to simplify notation) as:
\[\xymatrix@C=90pt@R=20pt{
X \ar[r]^-{d_{\ast}\circ\ c\quad}_-{\quad=~\mu_Z\circ D(d)\circ c} & D(Z)
}\]
For any set function $f\colon X\to Y$, one can define a Kleisli map:
\[\xymatrix@C=90pt@R=20pt{
X \ar[r]^-{\hat{f}~:=~\eta_{Y}\circ f}_-{\quad=~D(f)\circ \eta_{X}}
& D(Y)
}\]
given by naturality of $\eta$.
 
The category $\mathcal{K}(D)$ has 
finite coproducts $(0, +)$ given by the empty set $0:=\emptyset$, and disjoint union $X_1+X_2$ with coprojections: \[ \xymatrix@C=90pt@R=20pt{
X_i \ar[r]^-{\hat{\kappa_i}~:=~\eta_{(X_1+X_2)}\circ\kappa_i}_-{\quad=~D(\kappa_i)\circ \eta_{X_i}}
& D(X_1+X_2)
}\qquad (i=1,2)\] 
The following result is well-known, see \textit{e.g.}~\cite[Proposition 6.4]{jacobs2011probabilities}. 

\begin{lemma} \label{lem:aff}
The distribution monad $D$ is affine, \textit{i.e.}~$D(1)\cong 1$. Moreover, $D(1+1)\cong [0,1]$.
\end{lemma}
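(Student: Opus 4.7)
The plan is to compute both sets directly from the definition of $D$ given above, since each claim is an almost immediate consequence of the constraint that a discrete probability measure must sum to $1$.

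First I would handle $D(1)\cong 1$. Fix the terminal set $1=\{\ast\}$. By definition, an element of $D(1)$ is a function $p\colon\{\ast\}\to[0,1]$ with finite support such that $\sum_{x\in 1}p(x)=1$, which forces $p(\ast)=1$. Hence $D(1)$ contains exactly one element, namely $\mathbf{1}_\ast=\eta_1(\ast)$, so the unique map $D(1)\to 1$ is a bijection. This gives $D(1)\cong 1$, i.e.\ $D$ is affine.

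Next I would handle $D(1+1)\cong[0,1]$. Using the construction of coproducts from the preliminaries section, write $1+1=\{(\ast,1),(\ast,2)\}$, a two-point set; call the two points $a$ and $b$ for brevity. An element of $D(1+1)$ is then a function $p\colon\{a,b\}\to[0,1]$ with $p(a)+p(b)=1$. Such a $p$ is completely determined by $t:=p(a)\in[0,1]$, since $p(b)=1-t$ is forced. I would define
\[
\Phi\colon D(1+1)\longrightarrow[0,1],\qquad \Phi(p):=p(a),
\]
with inverse
\[
\Psi\colon[0,1]\longrightarrow D(1+1),\qquad \Psi(t):=t\,\mathbf{1}_a+(1-t)\,\mathbf{1}_b,
\]
and verify $\Phi\circ\Psi=\id_{[0,1]}$ and $\Psi\circ\Phi=\id_{D(1+1)}$ by a one-line check each, establishing the bijection $D(1+1)\cong[0,1]$.

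There is no real obstacle here; the only thing to be careful about is simply unpacking the definition of discrete probability measure on a singleton and on a two-element set, and keeping the identification $1+1\cong\{a,b\}$ consistent with the convex-combination notation $\sum_i\lambda_i\mathbf{1}_{x_i}$ used earlier in the section. The claim $D(1+1)\cong[0,1]$ will later justify viewing scalars/probabilities as maps $1\to 1+1$, which is the intended use in the effectus-theoretic development that follows.
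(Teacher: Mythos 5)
Your proof is correct and follows essentially the same route as the paper: unpack the definition of a discrete probability measure on a one-point set (forcing the constant value $1$) and on a two-point set (where the measure is determined by its value on one point). The explicit bijection $\Phi,\Psi$ you write out is just a slightly more detailed rendering of the paper's own argument.
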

\begin{proof}
An element $p\in D(1)$ can be regarded as a function $p\colon 1 \to [0,1]$ such that $\sum_{x\in 1}p(x)=1$. Therefore, it must be the case that $p$ is the constant function $1$. Thus $D(1)=\{1\}\cong 1$. Now, a $[0,1]$-valued distribution over a $2$-element set consists of a choice of $p\in [0,1]$ for one element and $1-p$ for the other element. Hence $D(1+1)\cong [0,1]$.
\end{proof}

\begin{proposition}
 $\mathcal{K}(D)$ has a terminal object.
 \end{proposition}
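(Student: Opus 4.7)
The plan is to show that the set $1 = \{\ast\}$, which is already terminal in $\Set$, remains terminal after passing to the Kleisli category. By definition, a morphism $X \to 1$ in $\mathcal{K}(D)$ is a function $X \to D(1)$, so the task reduces to counting such functions.

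First I would invoke Lemma~\ref{lem:aff}, which gives $D(1) \cong 1$. Hence $D(1)$ is itself a singleton set, and so there is a unique function $X \to D(1)$ for every set $X$, namely the constant map sending every $x \in X$ to the only element of $D(1)$ (which, under the bijection, is the point mass $\mathbf{1}_{\ast}$). This unique function is precisely $\hat{!}_X = \eta_1 \circ {!}_X$. Since any candidate Kleisli morphism $X \to 1$ has the same codomain $D(1) \cong 1$, uniqueness is automatic.

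The only thing left to observe is that this indeed agrees with the terminal structure expected for an effectus, in particular that $\hat{!}_1 = \id_1$ in $\mathcal{K}(D)$, which holds because $\eta_1 \colon 1 \to D(1)$ is the identity map in $\mathcal{K}(D)$ by definition of the Kleisli identities. No genuine obstacle arises: the whole argument rides on the affineness statement $D(1) \cong 1$ already established in Lemma~\ref{lem:aff}, and the rest is an immediate unwinding of the Kleisli-morphism definition.
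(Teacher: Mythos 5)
Your proof is correct and follows essentially the same route as the paper: both arguments invoke Lemma~\ref{lem:aff} to get $D(1)\cong 1$, observe that there is therefore exactly one function $X\to D(1)$ (namely $\hat{!}_X=\eta_1\circ{!}_X$), and note that $\eta_1=\id_1$ so the Kleisli structure is consistent. Nothing is missing.
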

 \begin{proof}
 By Lemma~\ref{lem:aff}, any choice of a singleton set $1$ in $\Set$  is a terminal object in $\mathcal{K}(D)$. We have unique arrows: 
\[\xymatrix@C=90pt@R=20pt{
X \ar[r]^-{\hat{!}_X~:=~\eta_{1}\circ!_X}_-{\qquad=~ D(!_X)\circ\eta_X}
& D(1)
}\]
for any $X$ in $\Set$. 
Since $1\cong D(1)$, the unit  $\eta_1\colon 1 \to D(1)$ and the identity function $\id_1\colon 1 \to 1$ are equal $\eta_1=\id_1$. Therefore, we have $\hat{!}_X =!_X$ for all $X$ in $\Set$. 
 \end{proof}

\begin{proposition}
$\mathcal{K}(D)$ has pullbacks.
\end{proposition}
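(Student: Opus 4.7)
The plan is to construct pullbacks in $\mathcal{K}(D)$ by lifting them along the Kleisli inclusion $\Set \to \mathcal{K}(D)$ sending $f \mapsto \hat f = \eta \circ f$. The cospans appearing in the pullback squares of Definition~\ref{def:eff} all have legs of this ``deterministic'' form, so I focus on cospans $\hat f \colon X \to Z \leftarrow Y \colon \hat g$ coming from set-functions $f, g$. Form the pullback $P := \{(x,y) \in X \times Y : f(x) = g(y)\}$ in $\Set$ with projections $p, q$; I claim $\hat P$ together with $\hat p, \hat q$ is the pullback of $\hat f, \hat g$ in $\mathcal{K}(D)$.

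For the universal property, take $c \colon W \to D(X)$ and $d \colon W \to D(Y)$ satisfying $\hat f_\ast \circ c = \hat g_\ast \circ d$. Expanding the Kleisli extension, this compatibility reads
\[
\sum_{x \in f^{-1}(z)} c(w)(x) \;=\; \sum_{y \in g^{-1}(z)} d(w)(y) \qquad (w \in W,\, z \in Z).
\]
The mediator $h \colon W \to D(P)$ would assign to each $w$ a joint distribution on $P$ whose marginals along $p$ and $q$ are $c(w)$ and $d(w)$; existence of such a coupling is guaranteed by the displayed compatibility, and verifying $\hat p_\ast \circ h = c$, $\hat q_\ast \circ h = d$ is a routine unfolding.

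The main obstacle is uniqueness of $h$, since joint distributions with prescribed marginals are not unique in general (the elementary theory of couplings). My argument must therefore exploit the specific geometry of the cospans in Definition~\ref{def:eff}: in each case the pullback $P$ decomposes as $\bigsqcup_z f^{-1}(z) \times g^{-1}(z)$ with at least one of $f^{-1}(z)$, $g^{-1}(z)$ a singleton for every $z \in Z$. This singleton-fibre property forces the joint distribution to be the product of its marginals on each piece, making $h$ unique. I would verify the property square-by-square: in the first square the fibres over $\kappa_1(\ast)$ and $\kappa_2(\ast)$ have shape $(1, X)$ and $(Y, 1)$ respectively, each with a singleton factor; in the second square $\kappa_1$ is injective, so all of its fibres are singletons. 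These are exactly the pullbacks needed to continue the effectus analysis of $\mathcal{K}(D)$.
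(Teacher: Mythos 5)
Your approach is genuinely different from the paper's, and considerably more substantive. The paper's proof of this proposition does not actually construct anything: it merely unfolds what the universal property of a pullback means in $\mathcal{K}(D)$ (the four equations relating the Kleisli maps $c,d,f,g,h,i,u$) and defers all real work to the proof of Theorem~\ref{prop:distr}, where the two specific squares of Definition~\ref{def:eff} are verified by direct computation with the mediator $u(a)(x):=c(a)(x)$, $u(a)(y):=d(a)(y)$. You instead lift the set-theoretic pullback along $f\mapsto\hat f$ and analyse the resulting coupling problem head-on. Your key observation --- that uniqueness of the mediating map fails for general cospans because joint distributions with prescribed marginals are not unique, but holds for the cospans of Definition~\ref{def:eff} because each fibre product $f^{-1}(z)\times g^{-1}(z)$ has a singleton (or empty) factor --- is exactly the conceptual reason the paper's later explicit computations succeed, and it is a cleaner account of it. Do note that in restricting to these cospans you are (correctly) proving less than the proposition literally asserts: $\mathcal{K}(D)$ does not have all pullbacks, by precisely the non-uniqueness of couplings you identify, so the statement can only be read as ``$\mathcal{K}(D)$ has the pullbacks required by Definition~\ref{def:eff}'', which is all that either you or the paper establishes. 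Your version has the advantage of making explicit where the hypothesis on the cospan is used.
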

\begin{proof}
Assume we have the following commutative diagram: 
\[ \xymatrix@C=50pt@R=20pt{
A\ar@/^1pc/[drr]^{d}\ar@/_2pc/[ddr]_{c} \ar@{.>}[dr]^-{u}&&\\
&P\ar[r]^-{i}\ar[d]_-{h}&Y\ar[d]^{g}\\
&X\ar[r]_-{f}&Z
}\]
 where all the arrows live in $\mathcal{K}(D)$, and the dashed arrow means that $u$ is uniquely defined. That is, we have a function $u\colon A \rightarrow D(P)$ which is determined in a unique way by  Kleisli maps:
 \begin{eqnarray*}
  c\colon A \rightarrow D(X)\quad& f\colon X \rightarrow D(Z)\quad& h\colon P \rightarrow D(X)\\
  d\colon A \rightarrow D(Y)\quad& g\colon Y\rightarrow D(Z)\quad& i\colon P \rightarrow D(Y)
 \end{eqnarray*}
 satisfying the following four equations:
 \begin{eqnarray}
h_{\ast}\circ u&=&c\label{p:1}\\
 i_{\ast}\circ u&=&d\label{p:2}\\
 f_{\ast}\circ c&=&g_{\ast}\circ d\label{p:3}\\
 f_{\ast}\circ h&=&g_{\ast}\circ i\label{p:4}
 \end{eqnarray}
 In that case, we have that $P$ is the pullback of $g$ along $f$ in $\mathcal{K}(D)$. 
\end{proof}

The following and last result is well-known, see \emph{e.g.}~\cite[Example 4.7]{jacobs2015new}.

\begin{theorem}\label{prop:distr}
The Kleisli category $\mathcal{K}(D)$ of the distribution monad $D$ on sets is an effectus.
\end{theorem}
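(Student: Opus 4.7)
The plan is to verify the defining conditions of Definition~\ref{def:eff} for $\mathcal{K}(D)$: the two pullback squares and the joint monicity of $\gamma_1,\gamma_2$. The finite coproduct structure and terminal object in $\mathcal{K}(D)$ are already in place from the preceding two propositions. The key technical leverage will come from Lemma~\ref{lem:aff}: the identification $D(1)\cong 1$ makes Kleisli maps into $1$ coincide with their set-theoretic counterparts, while $D(1+1)\cong [0,1]$ lets me encode Kleisli maps into $1+1$ by a single scalar, which in turn makes the compatibility conditions in the pullback diagrams concrete and easy to read off.

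For the first pullback square, I would proceed pointwise. Given compatible Kleisli maps $p\colon A\to D(X+1)$ and $q\colon A\to D(1+Y)$, compatibility after pushforward to $D(1+1)$ collapses to a single scalar $\lambda(a):=p(a)(X)=q(a)(1)\in[0,1]$ for each $a\in A$. The mediating map $u(a)\in D(X+Y)$ is then forced to agree with $p(a)$ on $X$ and with $q(a)$ on $Y$; the total mass $\lambda(a)+(1-\lambda(a))=1$ confirms this is a distribution, and the two projection equations yield uniqueness. The second square is even simpler: the compatibility condition forces the would-be mediator $v(a)\in D(X+Y)$ to have zero mass on $Y$, so it restricts uniquely to a distribution on $X$.

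For joint monicity, I would identify $(1+1)+1\cong\{a,b,c\}$ and use the explicit forms of $\gamma_1,\gamma_2$ from the proof of Theorem~\ref{prop:set-eff}. Representing any Kleisli map $f\colon X\to D(\{a,b,c\})$ pointwise by scalars $(p_a,p_b,p_c)$, one computes $(\gamma_1)_\ast f(x)=(p_a,\,p_b+p_c)$ and $(\gamma_2)_\ast f(x)=(p_b,\,p_a+p_c)$. If $g$ is represented similarly by $(q_a,q_b,q_c)$, the hypotheses $\gamma_1\circ f=\gamma_1\circ g$ and $\gamma_2\circ f=\gamma_2\circ g$ immediately pin down $p_a=q_a$ and $p_b=q_b$; the normalisation $p_a+p_b+p_c=1$ then forces $p_c=q_c$, hence $f=g$.

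I expect the only real obstacle to be notational bookkeeping: keeping straight which summand of each coproduct the various pushforwards are hitting. Once the scalar encoding from Lemma~\ref{lem:aff} is set up, every step reduces to elementary arithmetic in $[0,1]$ and essentially parallels the corresponding argument for $\Set$ in Theorem~\ref{prop:set-eff}.
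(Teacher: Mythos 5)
Your proposal is correct and follows essentially the same route as the paper's proof: pointwise verification of both pullback squares by tracking the mass each distribution places on the summands of the coproduct (with the compatibility condition in $D(1+1)$ reducing to a single scalar identity that forces the mediating distribution to be normalised), and joint monicity via explicit computation of the pushforwards along $\gamma_1,\gamma_2$ followed by the normalisation argument for the third component. As a minor aside, your formula $(\gamma_2)_\ast f(x)=(p_b,\,p_a+p_c)$ is the correct one; the paper's displayed expression for $D(\gamma_2)(\sigma)$ contains a typo ($\sigma(a)+\sigma(b)$ where $\sigma(a)+\sigma(c)$ is meant).
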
 
\begin{proof}
We need to check two pullback conditions and one joint monicity requirement for the Kleisli category $\mathcal{K}(D)$. We start with the first pullback from Definition~\ref{def:eff}. We assume to have the following Kleisli maps:
 \begin{eqnarray*}
  c\colon A \rightarrow D(X+1)\quad& f\colon X+1 \rightarrow D(1+1)\quad& h\colon X+Y \rightarrow D(X+1)\\
  d\colon A \rightarrow D(1+Y)\quad& g\colon 1+Y \rightarrow D(1+1)\quad& i\colon X+Y \rightarrow D(1+Y)
 \end{eqnarray*}
 where: 
 \begin{eqnarray*}
 f:=D(!_X + \id_1) \circ \eta_{X+1}&\qquad&h:=D(\id_X + !_Y) \circ \eta_{X+Y}\\
 g:=D(\id_1 + !_Y) \circ \eta_{1+Y}&\qquad&i:=D(!_X + \id_Y) \circ \eta_{X+Y}
 \end{eqnarray*}
By definition of Kleisli extension we have:
\[\begin{split}
f_{\ast} &= \mu_{1+1} \circ D(f)\\
&=\mu_{1+1} \circ D(D(!_X+\id_1)\circ\eta_{X+1})\\
&\overset{\star}{=} \mu_{1+1}\circ D(\eta_{1+1} \circ (!_X + \id_1))\\
&=\mu_{1+1}\circ D(\eta_{1+1}) \circ D(!_X + \id_1)\\
&=D(!_X + \id_1)
\end{split}\]
where the marked equality $\overset{\star}{=}$ follows from naturality of $\eta$, and the last one from the axioms of monads. Similarly, we have: 
\[\begin{split}
g_{\ast} &= D(\id_1 + !_Y)\\
h_{\ast} &= D(\id_X + !_Y)\\
i_{\ast} &= D(!_X + \id_Y)
\end{split}\] 
Therefore, equation~\eqref{p:4} above holds:
 \[\begin{split}
 f_{\ast}\circ h&=D(!_X + \id_1) \circ h\\
 &=D(!_X + \id_1) \circ D(\id_X + !_Y) \circ \eta_{X+Y}\\
  &=D((!_X + \id_1) \circ(\id_X + !_Y)) \circ \eta_{X+Y}\\
    &\overset{\star}{=}D((\id_1 + !_Y) \circ(!_X + \id_Y)) \circ \eta_{X+Y}\\
     &=D(\id_1 + !_Y) \circ D(!_X + \id_Y) \circ \eta_{X+Y}\\
 &= g_{\ast}\circ i
 \end{split}\]
 where the marked equality $\overset{\star}{=}$ follows from the fact that both squares in the definition of effectus (see Definition~\ref{def:eff}) commute in every category with finite coproducts and a terminal object. 
  
  Let $X+1=X+\{1\}$ and $1+Y =\{0\} + Y$. Further suppose  the Kleisli maps $c\colon A \to D(X+\{1\})$ and $d\colon A \to D(\{0\}+Y)$ satisfy equation~\eqref{p:3} above. More concretely, suppose: 
\begin{equation} \label{eq:proof2.1}
D(!_X+\id_Y)(c (a))= D(\id_X + !_Y) ( d (a))\quad \in \quad D(\{0\}+\{1\})
\end{equation}  
for all $a\in A$. 
Specifically, this equation~\eqref{eq:proof2.1} expanded and evaluated says that: 
\begin{equation}\begin{split}
D(!_X+\id_Y)(c(a))(0) 
&\overset{\eqref{eq:proof2.1}}{=} 
D(\id_X + !_Y)(d(a))(0)\\
&=\sum_{y\in (\id_X+!_Y)^{-1}(0)}d(a)(y) \\
&=
d(a)(0)
 \quad \in \quad [0,1] \label{eq:da02}
 \end{split}\end{equation}
 \begin{equation}\begin{split}
D(!_X+\id_Y)(c(a))(1) &= \sum_{x\in (!_X+ \id_Y)^{-1}(1)} c(a)(x)  \\
&=  c(a)(1) %\\
 \quad \in \quad [0,1] \label{eq:ca12}
\end{split}\end{equation}
Thus:
\begin{equation}\begin{split}
 d(a)(0) + c(a)(1) = 1 \label{eq:thus2}
\end{split}\end{equation} 
We have: 
\begin{equation*}\begin{split}
\sum_{x\in X} u(a)(x) + \sum_{y\in Y} u(a)(y)
&\overset{\textrm{def}}{=}
\sum_{x\in X} c(a)(x) + \sum_{y\in Y} d(a)(y)   \\
&=
\sum_{!_X(x)=0} c(a)(x)  + \sum_{!_Y(y)=1} d(a)(y) \\
&=
D(!_X+\id_Y)(c(a))(0)  + D(!_X+\id_Y)(c(a))(1)\\
&\overset{\star}{=}
d(a)(0) + c(a)(1)\\
&{=}
1
\end{split}\end{equation*}
where the marked equality $\overset{\star}{=}$ follows from~\eqref{eq:da02} and~\eqref{eq:ca12}, and the last equality from~\eqref{eq:thus2}.
Hence $u$ is well-defined. We still need to check~\eqref{p:1} and~\eqref{p:2} above, which in this case amounts to show that:
	\[\begin{split} 
	D(\id_X + !_Y) \circ u &= c  \label{eq:pbk12}\\
	D(!_X + \id_Y) \circ u &= d 
	\end{split}\] 
 For all $a\in A$, we have indeed:
\begin{equation*}\begin{split}
	D(\id_X + !_Y) (u(a))(x) &= 
	\sum_{x'\in(\id_X+!_Y)^{-1}(x)}
	 u(a)(x')\\
	&=
	u(a)(x)\\
	&\overset{\textrm{def}}{=}
	c(a)(x)\\
	D(\id_X + !_Y) (u(a))(1) &= \sum_{y\in(\id_X+!_Y)^{-1}(1)} u(a)(y)\\
	&=
	\sum_{y\in Y} u(a)(y)\\
	&\overset{\textrm{def}}{=}
	\sum_{y\in Y} d(a)(y)\\
	&=
	c(a)(1)
\end{split}
\end{equation*}
\begin{equation*}\begin{split}	
D(!_X + \id_Y) (u(a))(y) &= 
	\sum_{y'\in(!_X+\id_Y)^{-1}(y)} u(a)(y')\\
	&=
	u(a)(y)\\
	&\overset{\textrm{def}}{=}
	d(a)(y)\\
	D(!_X + \id_Y) (u(a))(0) &= \sum_{x\in(!_X+\id_Y)^{-1}(0)} u(a)(x)\\
	&=
	\sum_{x\in X} u(a)(x)\\
	&\overset{\textrm{def}}{=}
	\sum_{x\in X}c(a)(x)\\
	&=
	d(a)(0).
	\end{split}\end{equation*}
By definition, $u\colon A \to D(X + Y)$ is the unique Kleisli map satisfying the needed requirements. This completes the proof of the first pullback condition for $\mathcal{K}(D)$.

For the second pullback from Definition~\ref{def:eff}, 
let $1:=\{0\}$ and $1:=\{1\}$ be two distinct (choices of) singleton sets, and let $D(1)\cong\{\mathbf{1}_0\}$. % $ 
Consider Kleisli maps $!_A\colon A \to \{\mathbf{1}_0\}$ and $c\colon A \to D(X+Y)$ such that:
\begin{equation}\label{eq:jim2}
D(!_X+!_Y) \circ c   = D ( \kappa_1) \circ\ !_A%
\end{equation}
Since $c(a)=\sum_{x} p_x \mathbf{1}_x  + \sum_{y} p_y \mathbf{1}_y \in D(X+Y)$ with  $\sum_{x} p_x  + \sum_{y} p_y =1 \in [0,1]$ for all $a\in A$, we have that the left-hand side of equation~\eqref{eq:jim2} expands to:
\begin{equation*}\label{eq:this}
D(!_X+!_Y) (c(a)) = \sum_{x} p_x \mathbf{1}_0 + \sum_{y} p_y \mathbf{1}_1 
\end{equation*}
The right-hand side of equation~\eqref{eq:jim2}expands to:
\begin{equation*}\begin{split}
D( \kappa_1)(!_A(a)) 
&\overset{}{=}
 D ( \kappa_1)(\mathbf{1}_0) \\
 &= 
 \mathbf{1}_{\kappa_1(0)}\\
 &=
  \mathbf{1}_0
  \label{eq:that}
  \end{split}\end{equation*}
Hence $\sum_{x} p_x = 1$, and so $c(a)\in D(X)$. Let $u\colon A \to D(X)$ be defined as $u(a)(x):= c(a)(x)$. By definition, the Kleisli map $u\colon A \to D(X)$ is the unique arrow satisfying the needed requirements.

Now we prove that the maps $\gamma_1, \gamma_2 \colon (1+1)+1 \rightrightarrows 1 +1$ in are jointly monic in $\mathcal{K}(D)$. This part is taken exactly from~\cite[Example 4.7]{jacobs2015new}. Let $\sigma,\tau\in D(1+1+1)$ be distributions such that 
\begin{equation}\begin{split}
D(\gamma_1)(\sigma) &= D(\gamma_1)(\tau) \label{eq:jo12}\\
D(\gamma_2)(\sigma) &= D(\gamma_2)(\tau)% 
\end{split}\end{equation}
 in $D(1+1)$. Assume $1+1+1=\{a,b,c\}$ and $1+1=\{0,1\}$. We have the following convex combinations for $\sigma$ in $D(1+1)$:
\begin{equation*}\begin{split} 
D(\gamma_1)(\sigma)
&=
 \sigma(a) \mathbf{1}_0 + (\sigma(b)+\sigma(c))\mathbf{1}_1\\
 D(\gamma_2)(\sigma) 
 &=
 \sigma(b)\mathbf{1}_0 + (\sigma(a)+\sigma(b))\mathbf{1}_1
\end{split}\end{equation*}
Similarly for $\tau$:
\begin{equation*}\begin{split} 
D(\gamma_1)(\tau)
&=
 \tau(a)  \mathbf{1}_0 + (\tau(b)+\tau(c)) \mathbf{1}_1\\
 D(\gamma_2)(\tau) 
 &=
 \tau(b) \mathbf{1}_0 + (\tau(a)+\tau(b)) \mathbf{1}_1
\end{split}\end{equation*}
Hence, by the first equation in~\eqref{eq:jo12}, we have $\sigma(a)=\tau(a)$. Similarly, by the second equation in~\eqref{eq:jo12}, we have $\sigma(b)=\tau(b)$. We still need to show that $\sigma(c)=\tau(c)$. Since $\sigma(a)+\sigma(b)+\sigma(c)=1=\tau(a)+\tau(b)+\tau(c)$, then:
\[\begin{split}
\sigma(c) &= 1 - 
(\sigma(a) + \sigma(b))\\
&=  1 -
(\tau(a) + \tau(b))\\
&=
\tau(c)
\end{split}\] 
\end{proof}

\section{Effectus of projection-valued measures}

A simple undirected graph $G$ is a relational structure with a single, binary irreflexive and symmetric relation $E(G)$ that we have been written as $\sim$ in infix notation.  
Relational structures are more general. 
 A \emph{relational structure} $\mathscr{A}=(A,R(\mathscr{A}))$ consists of a set $A$ together with an indexed family $R(\mathscr{A})=(R^{\mathscr{A}}_{i})_{i\in I}$ of relations $R^{\mathscr{A}}_{i}\subseteq A^{k_i}$ with $I$ a set of indices, and $k_i\geq 1$ for all $i\in I$. 
 A \emph{homomorphism} of relational structures $\mathscr{A} \to \mathscr{B}$ is a function $f\colon A \to B$ between the underlying sets, preserving all relations: $(x_1,\dots,x_k)\in R^{\mathscr{A}} \Rightarrow (f(x_1),\dots,f(x_k))\in R^{\mathscr{B}} $ for all $(x_1,\dots,x_k)\in A^k$ and all $R^{\mathscr{A}}\in R(\mathscr{A})$ with arity $k\geq 1$.  There is a category  $\RStr$ of relational structures and homomorphisms between them.

 We write $\mathrm{Proj}(d)\subseteq M_d(\mathbb{C})$ for the set of all $d\times d$ complex matrices that are both self-adjoint and idempotent, \textit{i.e.}~$\mathrm{Proj}(d)=\{a\in M_d(\mathbb{C}) : a^{\ast}=a^2=a\}$ for all $d\geq 1$.  The elements of $\mathrm{Proj}(d)$ are called \emph{$d$-dimensional projections}.  
 For every $d\geq 1$, the identity matrix is a projection $\mathbbm{1}\in\mathrm{Proj}(d)$.
 
 For every relational structure $\mathscr{A} =(A, R(\mathscr{A}))$ and every positive integer $d\geq 1$,  define a relational structure $Q_d
(\As)= (Q_d(A), R({Q_d(\mathscr{A})})$, where
 \[Q_d(A):=\{\sum_{x\in A}p_x\mathbf{1}_x : p_x\in\mathrm{Proj}(d), \sum_{x\in A} p_x=\mathbbm{1}\} \]
and every $k$-ary relation $R^{Q_d(\mathscr{A})}\subseteq Q_d(A)^k$ in $R(Q_d(\mathscr{A}))$ is defined as the set of $k$-tuples 
$(p_1,\dots,p_k)$ 
of projection-valued distributions $p_1,\dots, p_k \in Q_d(A)$ satisfying:
\begin{itemize}
 \item[(1)] $p_i(x)$ and $p_j(x')$ commute for all $x,x'\in A$
 \item[(2)] $(x_1,\dots,x_k)\notin R^{\mathscr{A}}$ implies $\prod_{i=1}^k p_i(x_i)=0$ for all $(x_1,\dots,x_k)\in A^k$
 \end{itemize} 
There cannot be infinitely many projections resolving the $d$-dimensional identity. Therefore, every projection-valued distribution $p\in Q_d(A),~p\colon A\to\mathrm{Proj}(d)$, has finitely manny non-zero values.  These distributions are  \emph{projection-valued measures} (PVMs) from functional analysis and quantum theory~\cite{heinosaari2011mathematical, chang_2015}.

For any homomorphism $f\colon \mathscr{A} \rightarrow \mathscr{B}$, we define the homomorphism $Q_d(f)\colon Q_d(\mathscr{A}) \rightarrow Q_d(\mathscr{B})$  as 
 \[Q_d(f)(p)(y):=\sum_{x\in f^{-1}(y)}p(x)\qquad (y\in B)\]  or, equivalently, as $Q_d(f)(p):=\sum_x p_x\mathbf{1}_{f(x)}\in Q_d(\mathscr{B})$.  
 This definition preserves composites and identities, so there is a functor $Q_d\colon \RStr \rightarrow \RStr$ for every $d\in\mathbb{N}=\{1,2,\dots\}$ (see~\cite{abramsky2017quantum} for more details). 

Note that $\mathrm{Proj}(1)\cong \{0,1\}\cong 1 + 1 $. We define $\eta_{\As} \colon \mathscr{A} \rightarrow Q_1(\mathscr{A})$ to be the indicator function $\mathbf{1}_x$ at $x\in A$: $\eta_{\As}(x)(x')=1$ if $x=x'$ and $\eta_{\As}(x)(x')=0$ if $x\neq x'$.  Next we use the tensor product of matrices. 
For all $d,d'\geq 1$, let $\mu^{d,d'}_{\As}\colon Q_d Q_{d'}(\mathscr{A}) \rightarrow Q_{dd'}(\mathscr{A})$ be defined for any $P\colon Q_{d'}(A)\to\mathrm{Proj}(d)$ as: \[\mu^{d,d'}_{\As}(P)(x):=\sum_{p\in Q_{d'}(A)} P(p) \otimes p(x)\qquad (x\in A)\] 
 These maps $\eta_{\As},~\mu^{d,d'}_{\As}$ are   components of  natural transformations $\eta \colon 1 \Rightarrow Q_1,~\mu^{d,d'}\colon Q_d Q_{d'} \Rightarrow Q_{dd'}$ satisfying the axioms of graded monads~\cite{milius2015generic}:
\[\mu^{d,1}_{\As} \circ Q_d (\eta_{\As})=\id_{Q_d(\As)}= \mu^{1,d}_{\As}\circ \eta_{Q_d(\As)} \quad\text{and}\quad\mu^{d,d'd''}_{\As}\circ Q_d(\mu^{d', d''}_{\As})= \mu^{dd',d''}_{\As} \circ \mu^{d,d'}_{Q_{d''}(\As)}\]

 Given $\mathscr{A},\mathscr{B}$ relational structures, $\mathscr{A}+\mathscr{B}$ is the relational structure over the set $A+B$ where the $k$-ary relation $R^{\mathscr{A}+\mathscr{B}}$ is defined as all the tuples $(x_1, \dots,x_k)\in (A+B)^k$ satisfying either $(x_1, \dots,x_k)\in R^{\mathscr{A}}$ or $(x_1, \dots,x_k)\in R^{\mathscr{B}}$. Also, we have the structure $0$ over the empty set $\emptyset =0$ with no relations. Further we have a structure $1$ over some singleton set $1=\{\ast\}$ with the universal relation of arity $k$, \textit{i.e.} one has $R^1:=1^k=1\times\cdots\times 1$.  Like the distribution monad $D$, the quantum monad $Q_d$ is also an affine monad since $Q_d(1)\cong 1$. The following result is immediate: 
 \begin{proposition}
$\mathcal{K}(Q_d)$ has a terminal object and finite coproducts. 
\end{proposition}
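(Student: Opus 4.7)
The plan is to transport the terminal object and finite coproducts from $\RStr$ to $\mathcal{K}(Q_d)$, replaying almost verbatim the argument for $\mathcal{K}(D)$ in Theorem~\ref{prop:distr} and exploiting the affineness $Q_d(1)\cong 1$ stated in this section together with the fact that $\RStr$ itself has finite coproducts and a terminal object.

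For the terminal object, I would pick the singleton structure $1$ (carrying the universal relations, as described just above). Since $Q_d(1)\cong 1$, the unit component $\eta_1\colon 1\to Q_1(1)$ is identified with $\id_1$, so the Kleisli arrow $\hat{!}_{\As}:=\eta_1\circ!_{\As}$ coincides with the unique homomorphism $!_{\As}\colon\As\to 1$ in $\RStr$; uniqueness in $\mathcal{K}(Q_d)$ then reduces to uniqueness of $!_{\As}$ in $\RStr$ together with the isomorphism $Q_d(1)\cong 1$. For the initial object, the empty structure $0$ works: the empty function is the only candidate for a Kleisli map $0\to Q_d(\As)$, and it is automatically a homomorphism. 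For binary coproducts, I would take the $\RStr$-coproduct $\As+\Bs$ equipped with Kleisli coprojections $\hat{\kappa}_i:=\eta_{\As+\Bs}\circ\kappa_i$. Given Kleisli maps $f_i\colon \As_i\to Q_d(\mathscr{C})$, I would define the cotuple $[f_1,f_2]\colon A+B\to Q_d(C)$ by the set-theoretic cotuple of the underlying functions. The cotupling equations $[f_1,f_2]\circ\hat{\kappa}_i=f_i$ and uniqueness of $[f_1,f_2]$ then follow from the universal property of $+$ in $\RStr$.

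The one non-formal step, and the main obstacle, is verifying that $[f_1,f_2]$ is actually a homomorphism of relational structures $\As+\Bs\to Q_d(\mathscr{C})$. By the definition of $+$ in $\RStr$ recalled in this section, every tuple $(x_1,\dots,x_k)\in R^{\As+\Bs}$ lies either entirely in $A$ or entirely in $B$, so the tuple of images $([f_1,f_2](x_1),\dots,[f_1,f_2](x_k))$ is obtained by applying a single $f_i$. Consequently the two conditions defining the relations $R^{Q_d(\mathscr{C})}$ (pairwise commutation of values $p_i(x)$, and vanishing of products indexed by non-related tuples) transfer directly from the hypothesis that $f_i$ is a homomorphism into $Q_d(\mathscr{C})$. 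The graded structure of $Q_d$ plays no role here, since only the functoriality of $Q_d$ and the universal properties in $\RStr$ are used.
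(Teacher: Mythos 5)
Your proposal is correct and follows exactly the route the paper intends: the paper gives no proof at all, declaring the result ``immediate'' from the affineness $Q_d(1)\cong 1$ and the standard lifting of coproducts from the base category along the Kleisli coprojections $\hat{\kappa}_i=\eta\circ\kappa_i$. Your write-up supplies the details the paper omits, including the one point that genuinely needs checking --- that the set-theoretic cotuple is a homomorphism into $Q_d(\mathscr{C})$, which holds because every related tuple in $\As+\Bs$ lies entirely in one summand.
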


Recall that semirings are rings without all additive inverses. That is, a \emph{semiring} consists of a set $S$ and two binary operations $+,\cdot\colon S\times S \to S$ called addition and multiplication, such that $(S,+)$ is a commutative monoid, $(S,\cdot)$ is a monoid, and there is a distributive law of multiplication over addition. A \emph{partial semiring} is a semiring where at least one binary operation is partially defined, \emph{i.e.}~the binary operation is a partial function. The real unit interval $[0,1]$ is a partial semiring with addition $x+y$ defined only when $x+y\leq 1$. For any $d\geq 1$, the set of $d$-dimensional projections $\mathrm{Proj}(d)$ is a partial semiring with addition $p+q$ defined only when $p\cdot q = 0$.

\begin{theorem} \label{thm:effKlQ}
The Kleisli category $\mathcal{K}(Q_d)$ of the quantum monad $Q_d$ is an effectus.
\end{theorem}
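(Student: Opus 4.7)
The plan is to adapt the proof of Theorem~\ref{prop:distr} for $\mathcal{K}(D)$ by replacing $[0,1]$-valued distributions with projection-valued measures throughout, relying on two elementary facts about projections: (i) any finite family of projections summing to $\mathbbm{1}$ is automatically pairwise orthogonal (positivity of the trace gives $\mathrm{tr}(p_i p_j) = 0$ for $i \ne j$, hence $p_j p_i p_j = 0$ and so $p_i p_j = 0$); and (ii) a finite sum of projections equal to $0$ forces each summand to be $0$. Together these let me replay the three verifications carried out in the proof of Theorem~\ref{prop:distr} almost verbatim.

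For the first pullback condition of Definition~\ref{def:eff}, suppose we have Kleisli maps $c\colon A \to Q_d(X+1)$ and $d\colon A \to Q_d(1+Y)$ whose pushforwards into $Q_d(1+1)$ agree. I would evaluate this compatibility at the two points of $1+1=\{0,1\}$ to obtain $\sum_{x\in X} c(a)(x) = d(a)(0)$ and $c(a)(1) = \sum_{y\in Y} d(a)(y)$, and then define $u(a)(x) := c(a)(x)$ for $x \in X$ and $u(a)(y) := d(a)(y)$ for $y \in Y$. The total sum is $d(a)(0) + c(a)(1) = \mathbbm{1}$, so by fact~(i) the family $u(a)$ is a bona fide PVM in $Q_d(X+Y)$; the two defining equations and uniqueness of $u$ are then identical to the bookkeeping in the proof of Theorem~\ref{prop:distr}. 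For the second pullback, given $c\colon A \to Q_d(X+Y)$ with $Q_d({!}_X+{!}_Y)\circ c = Q_d(\kappa_1)\circ{!}_A$, the right-hand side evaluated at $1 \in 1+1$ is $0$, hence $\sum_{y\in Y} c(a)(y) = 0$, and fact~(ii) makes each $c(a)(y)$ vanish, so $c$ factors uniquely through $Q_d(X)$.

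For joint monicity of $\gamma_1, \gamma_2$, I would transport the $\mathcal{K}(D)$ argument word for word: reading off the $\mathbf{1}_0$-coefficient in $Q_d(\gamma_1)(\sigma) = Q_d(\gamma_1)(\tau)$ yields $\sigma(a) = \tau(a)$, the analogous step for $\gamma_2$ yields $\sigma(b) = \tau(b)$, and $\sigma(c) = \tau(c)$ then follows from $\sigma$ and $\tau$ being PVMs on the three-point set $1+1+1$.

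The step I expect to be the main obstacle is not the arithmetic above but verifying that the constructed $u\colon A \to Q_d(X+Y)$ is actually an $\RStr$-homomorphism, i.e.\ preserves every relation of $\As$. For each $(a_1, \ldots, a_k) \in R^{\As}$ the image tuple $(u(a_1), \ldots, u(a_k))$ must satisfy the commutation and vanishing axioms defining $R^{Q_d(X+Y)}$. The ``pure'' cases with all coordinates in $X$ or all in $Y$ reduce immediately to $c$ and $d$ being $\RStr$-homomorphisms, but the ``mixed'' cases involve cross-terms $c(a_i)(x)\cdot d(a_j)(y)$, and since the coproduct relation $R^{\As+\Bs}$ admits no mixed tuples every such cross-product is required to vanish. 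Handling this should use the identities $c(a_j)(1) = \sum_y d(a_j)(y)$ and $d(a_j)(0) = \sum_x c(a_j)(x)$ together with the vanishing of pure products enforced by the $\RStr$-homomorphism property of $c$ and $d$ at off-relation tuples of $X+1$ and $1+Y$, so that each mixed product can be reassembled from controlled pure ones.
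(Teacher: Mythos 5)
Your proposal follows essentially the same route as the paper: the set-level verifications of the two pullback squares and of joint monicity are transported from the $\mathcal{K}(D)$ proof, and your facts (i) and (ii) about projections are precisely what makes $\mathrm{Proj}(d)$ behave like the partial semiring $[0,1]$ (addition defined only on orthogonal pairs), which is the substitution the paper invokes. The homomorphism-preservation step you single out is indeed the only delicate point --- the paper disposes of it with ``$u$ is a homomorphism by definition'' --- and your sketch does close: since a tuple mixing points of $A$ with the added point $\ast$ of $1$ never lies in $R^{\As+1}$, summing the resulting vanishing products over the free coordinates gives $c(a_i)(x)\cdot c(a_j)(\ast)=0$; as $c(a_j)(\ast)=\sum_{y\in B} d(a_j)(y)$ by the compatibility equation, each $d(a_j)(y)$ is a subprojection of $c(a_j)(\ast)$ and hence $c(a_i)(x)\cdot d(a_j)(y)=0$, which yields both the commutation clause and the nullity clause of $R^{Q_d(\As+\Bs)}$ on all mixed tuples.
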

\begin{proof}
The proof of Theorem~\ref{prop:distr} works for any $S$-valued distributions monad with $S$ partial semiring  since we did not use any fact about the unit interval $[0,1]$ that do not hold for any other partial semiring. There we saw the details for $S=[0,1]$, and here we leave those for $S=\mathrm{Proj(d)}$ to the reader.  Similarly, the two pullback conditions and one joint monicity requirement from Definition~\ref{def:eff} hold for the Kleisli category $\mathcal{K}(Q_d)$ forgetting the homomorphism part (\emph{i.e.}~preserving relations).  Thus, all that carries the same at the level of sets. Now we mention the parts about homomorphism. Let $u\colon \mathscr{P} \to Q_d(\As + \Bs)$ be the Kleisli map defined (in the first pullback condition) for each $a\in P$ as $u(a)(x):= c(a)(x) \in \mathrm{Proj}(d)$ for all $x\in A$, and $u(a)(y):= d(a)(y) \in \mathrm{Proj}(d)$ for all $y\in B$, where $c\colon \mathcal{P}\rightarrow Q_d(\As + 1)$ and $d\colon  \mathcal{P} \rightarrow Q_d(1+\Bs)$ are given homomorphisms.  This Kleisli map $u$ is  homomorphism by definition, since both $c$ and $d$ are homomorphisms by assumption. For the second pullback condition, now we suppose to have  a homomorphism $c\colon \mathcal{P}\rightarrow Q_d(\As+\Bs)$ and define $u\colon \mathcal{P}\rightarrow Q_d(\As)$ as $u(a)(x):= c(a)(x)$ for all $a\in P$ and $x\in A$. Once again, here we have that $u$ is homomorphism by definition since $c$ is homomorphism by assumption. 
\end{proof}

\section{Quantum probabilistic reasoning}

States in $\mathcal{K}(D)$ are discrete probability measures. In  $\mathcal{K}(Q_d)$, states are quantum measurments (PVMs) also known as \emph{sharp observables} \cite{heinosaari2011mathematical}.  
We shall start describing what is the situation with respect to states and predicates in general for an arbitrary effectus $\B$. Formally, a \emph{state} on $X$ is a morphism in $\B$ with type $1\to X$. A \emph{predicate} on $X$ is a morphism in $\B$ with type $X\to 1+1$. 
There is an adjunction:
\[\xymatrix@C=30pt@R=20pt{
\Pred(\B)^{\mathrm{op}}\ar@/^8pt/[rr]^{\Stat} &\top&\Stat(\B)\ar@/^8pt/[ll]^{\Pred} \\
&\B\ar[ur]\ar[ul]&
}\]
where $\B\rightarrow\Stat(\B)$ and $\B\rightarrow\Pred(\B)^{\mathrm{op}}$ are the functors defined on objects as $\Stat(X):=\B(1, X)$ and $\Pred(X):=\B(X, 1+1)$, for any $X$ object of $\B$; the action of these functors on a given morphism $f\colon X \rightarrow Y$ in $\B$ produce morphisms $\Stat(f)\colon \Stat(X) \rightarrow \Stat(Y)$  and $\Pred(f)\colon \Pred(Y) \rightarrow \Pred(X)$ called \emph{state} and \emph{predicate transformer} 
 defined by compositon in $\B$
as 
 $\Stat(f)(p):=f\circ p$ and $\Pred(f)(q):=q\circ f$, for any $p\in\Stat(X)$ and $q\in\Pred(Y)$. 
 
 Morphisms in $\Pred(1)=\Stat(1+1) =\B(1, 1+1)$ are called \emph{scalars}.  For probability theory, scalars are probabilities $\mathcal{K}(D)(1, 1+1)\cong [0,1]$. 
Given a state $p\in\Stat(X)$ and a predicate $q\in \Pred(X)$ on the same object $X$ of $\B$, we have by definition $\Stat(q)(p)=\Pred(p)(q)=q\circ p$. 
%The \emph{validity} $p\models q$ of $q$ in $p$ is defined to be the morphism $q\circ p\in\B(1, 1+1)$.   
At the level of sets, scalars for the quantum case are projections $\mathcal{K}(Q_d)(1, 1+1)\cong  \mathrm{Proj}(d)$.  Thus, predicates are assignments of projections. Now, let's consider $\mathrm{Proj}(d)$ as relational structure of projections with a $k$-ary relation $R^{\mathrm{Proj}(d)}$ given by $(p_1,\dots,p_k)\in R^{\mathrm{Proj}(d)} $ if and only if the projections $p_1,\dots,p_k$ pairwise commute: $p_i\cdot p_j = p_j \cdot p_i$ for all $i,j=1,\dots,k$.  

\begin{proposition}
Let $\As=(A, R^{\As})$ be a relational structure. Then:
\begin{itemize}
\item a \emph{state} on $\As$ is a PVM $p\in Q_d(\As)$ on the underlying set $A$;  
\item a \emph{predicate} $q\colon \As \to \mathrm{Proj}(d)$ is an assignment of projections $q\colon A \to \mathrm{Proj}(d)$ such that points appearing in some tuple in the relation $R^\As$ get assigned commuting projections, \textit{i.e.}~projections $q(x_i), q(x_j)\in \mathrm{Proj}(d)$ commute if there exists $\mathbf{x}\in A^k$ such that: \[\mathbf{x}=(x_1,\dots, x_i,\dots,x_j,\dots,x_k),\quad \mathbf{x}\in R^{\As}\]
\end{itemize}
\end{proposition}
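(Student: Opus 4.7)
The plan is to unfold the definitions and match what a Kleisli morphism really is at the level of underlying sets. By construction, a morphism $\mathcal{C} \to \mathcal{D}$ in $\mathcal{K}(Q_d)$ is a $\RStr$-homomorphism $\mathcal{C} \to Q_d(\mathcal{D})$. Therefore a state on $\As$ is a homomorphism $s\colon 1 \to Q_d(\As)$ and a predicate on $\As$ is a homomorphism $q\colon \As \to Q_d(1+1)$; it remains to read off the data each of these carries.

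For the state clause, I would observe that on underlying sets $s$ is determined by $p := s(\ast) \in Q_d(A)$, which by definition is a PVM on $A$. The homomorphism requirement is that the constant tuple $(p,\dots,p)$ lies in every relation $R^{Q_d(\As)}_i$, since $R^{1}_i$ is the universal relation on the singleton. The commutation clause from the definition of $R^{Q_d(\As)}_i$ is automatic: the projections composing any PVM are pairwise orthogonal, because $\sum_{x} p(x) = \mathbbm{1}$ forces $p(x)p(x') = 0$ for $x \neq x'$, and orthogonal projections commute. The orthogonality clause demands $\prod_j p(x_j) = 0$ whenever $(x_1,\dots,x_k) \notin R^{\As}_i$, and this is already forced by pairwise orthogonality as soon as two entries of the tuple differ, reducing matters to diagonal tuples.

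For the predicate clause, I would use the bijection $Q_d(1+1) \cong \mathrm{Proj}(d)$ sending a PVM $p$ on $\{0,1\}$ to $p(0)$, with $p(1) = \mathbbm{1} - p(0)$ determined by normalisation. Under this identification, $q$ corresponds to an assignment $\tilde q\colon A \to \mathrm{Proj}(d)$. The homomorphism requirement is that for each tuple $(x_1,\dots,x_k) \in R^{\As}_i$ the tuple of PVMs $(q(x_1),\dots,q(x_k))$ belongs to $R^{Q_d(1+1)}_i$, whose commutation clause asks that $q(x_i)(y)$ commutes with $q(x_j)(y')$ for all $y,y' \in 1+1$. Since $\mathbbm{1} - p$ commutes with $p'$ exactly when $p$ does, the four choices of $(y,y')$ collapse to the single demand that $\tilde q(x_i)$ and $\tilde q(x_j)$ commute, which is precisely the stated criterion.

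The main obstacle is in the predicate clause, namely establishing that $Q_d(1+1) \cong \mathrm{Proj}(d)$ as relational structures (with $\mathrm{Proj}(d)$ carrying the commuting-tuple relation introduced just above the statement) and verifying that the homomorphism condition unpacks exactly to pairwise commutation of the $\tilde q(x_i)$'s. The state clause is essentially immediate once one exploits that the summands of a PVM are pairwise orthogonal.
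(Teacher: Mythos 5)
The paper offers no proof of this proposition --- it is presented as a direct unfolding of the definitions, following the remark that $\mathcal{K}(Q_d)(1,1+1)\cong\mathrm{Proj}(d)$ --- so your strategy of identifying states with homomorphisms $1\to Q_d(\As)$ and predicates with homomorphisms $\As\to Q_d(1+1)$ is exactly the intended route. However, your unfolding leaves two genuine gaps, and in both cases the difficulty sits precisely where you stop. For the state clause, you correctly reduce clause (2) of the definition of $R^{Q_d(\As)}$ to diagonal tuples using pairwise orthogonality of the PVM elements, but you never resolve the diagonal case: if $(x,\dots,x)\notin R^{\As}$ for some relation, the condition reads $p(x)^k=p(x)=0$, which is a nontrivial constraint on the PVM (for an irreflexive structure such as a simple graph it rules out every PVM, since $\sum_x p(x)=\mathbbm{1}$). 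So the claim that \emph{every} PVM on $A$ yields a state does not follow from your argument as written; it requires either an additional hypothesis (reflexivity of the relations, i.e.\ all diagonal tuples lying in $R^{\As}$) or an explicit restriction on which PVMs qualify.

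The predicate clause has the more serious gap. You verify only clause (1) (commutation) of the defining conditions for $(q(x_1),\dots,q(x_k))\in R^{Q_d(1+1)}$ and silently drop clause (2), which demands $\prod_i q(x_i)(y_i)=0$ for every tuple $(y_1,\dots,y_k)\notin R^{1+1}$. Since $R^{1+1}$ consists only of the constant tuples $(0,\dots,0)$ and $(1,\dots,1)$, this is a substantial extra requirement: already for $k=2$, writing $\tilde q(x)=q(x)(0)$, the conditions $q(x_1)(0)\,q(x_2)(1)=0$ and $q(x_1)(1)\,q(x_2)(0)=0$ give $\tilde q(x_1)=\tilde q(x_1)\tilde q(x_2)=\tilde q(x_2)$, i.e.\ related points are forced to receive \emph{equal} projections, not merely commuting ones. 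Hence the identification of homomorphisms $\As\to Q_d(1+1)$ with homomorphisms into the structure $\mathrm{Proj}(d)$ carrying the commuting-tuple relation is not established by your argument and does not hold under the literal definitions in the paper; closing this gap requires either proving $Q_d(1+1)\cong\mathrm{Proj}(d)$ \emph{as relational structures} (which the above computation shows fails for the stated relations) or arguing that the paper's predicate object is to be taken as $\mathrm{Proj}(d)$ with the commutation relation by fiat. You flagged this isomorphism as the main obstacle, but then asserted rather than proved it, and the assertion is where the proof breaks.
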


Given a PVM and an assignment of commuting projections, we can compute their validity, or expected value, of the projections in the measure. 

\begin{proposition}
Let $p=(p_x:x\in A),~p_x\in\mathrm{Proj}(d),$ be a PVM and $q=(q_x:x\in A),~q_x\in\mathrm{Proj}(d'),$ a collection of pairwise commuting projections. 
\begin{itemize}
\item \emph{Validity of $q$ in $p$} is the projection $p\models q\in\mathrm{Proj}(dd')$ defined as: 
\[\begin{split}
p\models q&:=\sum_{x\in A} p_x \otimes q_x
\end{split}\]
\item \emph{Conditioning $p$ given $q$} is the PVM $p|_q\in Q_{dd'}(\As)$ defined if validity $p \models q$ is non-zero as:
\[
%p|_q:=\sum_{x\in A}\frac{p_x\otimes q_x}{p\models q} \mathbf{1}_x
p|_q=\left(\frac{p_x\otimes q_x}{p\models q}:x\in A\right)
\]  
\end{itemize}
\end{proposition}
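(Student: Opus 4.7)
The plan is to verify the two distinct claims made by the proposition: first, that the matrix $p\models q:=\sum_{x\in A}p_{x}\otimes q_{x}$ really does lie in $\mathrm{Proj}(dd')$, and second, that $p|_{q}$ defines a PVM in $Q_{dd'}(\As)$ once the formal quotient by the projection $p\models q$ is given a precise meaning.

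For the validity claim, I would first record the elementary tensor-product identities $(A\otimes B)^{\ast}=A^{\ast}\otimes B^{\ast}$ and $(A\otimes B)(C\otimes D)=AC\otimes BD$, which immediately show that each $p_{x}\otimes q_{x}$ is self-adjoint and idempotent, hence lies in $\mathrm{Proj}(dd')$. The substantive step is to prove pairwise orthogonality: $(p_{x}\otimes q_{x})(p_{y}\otimes q_{y})=0$ for $x\neq y$. This reduces to showing $p_{x}p_{y}=0$ for $x\neq y$, which is a standard lemma about PVMs that I would establish inside the proof. Starting from $\sum_{x}p_{x}=\mathbbm{1}$, one computes $p_{y}(\mathbbm{1}-p_{y})p_{y}=p_{y}-p_{y}=0$; expanding $\mathbbm{1}-p_{y}=\sum_{x\neq y}p_{x}$ rewrites this as $\sum_{x\neq y}p_{y}p_{x}p_{y}=0$, a sum of positive operators $(p_{x}p_{y})^{\ast}(p_{x}p_{y})$, forcing each $p_{x}p_{y}=0$. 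Once orthogonality is established, $p\models q$ is a sum of pairwise orthogonal projections, and hence itself a projection in $\mathrm{Proj}(dd')$.

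For the conditioning claim, the interpretive step is to read $(p_{x}\otimes q_{x})/(p\models q)$ as the restriction of the projection $p_{x}\otimes q_{x}$ to the range $V\subseteq\mathbb{C}^{dd'}$ of the non-zero projection $p\models q$; equivalently, as multiplication by the Moore--Penrose pseudoinverse of $p\models q$, which for a projection coincides with the projection itself. The computations $(p\models q)(p_{x}\otimes q_{x})=p_{x}\otimes q_{x}=(p_{x}\otimes q_{x})(p\models q)$, immediate from the orthogonality already proved, show that each $p_{x}\otimes q_{x}$ commutes with $p\models q$ and has range inside $V$. Hence each $p|_{q}(x)$ is a projection on $V$, and the $p|_{q}(x)$ are pairwise orthogonal with sum $\sum_{x}p_{x}\otimes q_{x}=p\models q$, which restricts to the identity on $V$. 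This verifies that $p|_{q}$ is a PVM once it is viewed as valued in the projections of the reduced subsystem $V$.

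The main obstacle is purely formal: division by a projection is not literally an element of $M_{dd'}(\mathbb{C})$ unless $p\models q=\mathbbm{1}$, so the type declaration $p|_{q}\in Q_{dd'}(\As)$ requires a convention (either a generalised inverse, or an identification of $p\models q$ with the identity of its range) to be made explicit. Once that convention is fixed, everything else reduces to routine calculation using idempotence, self-adjointness and the orthogonality lemma of the first step.
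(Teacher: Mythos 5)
Your verification is correct, and it is worth noting that the paper itself offers no proof of this proposition at all --- it is stated as a bare definition, so there is nothing to compare against; what you have written is the justification the paper leaves implicit. The two substantive points you supply are exactly the right ones: (i) the orthogonality lemma $p_xp_y=0$ for $x\neq y$, derived from $\sum_x p_x=\mathbbm{1}$ via $\sum_{x\neq y}(p_xp_y)^{\ast}(p_xp_y)=p_y(\mathbbm{1}-p_y)p_y=0$ and positivity, which makes the cross terms in $(\sum_x p_x\otimes q_x)^2$ vanish and shows $p\models q$ is a projection (the pairwise commutativity of the $q_x$ is never needed for this, as you correctly do not use it); and (ii) the observation that the ``quotient'' $(p_x\otimes q_x)/(p\models q)$ only typechecks after a convention is fixed. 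Your second point is in fact a genuine criticism of the statement as written: since $\sum_x p_x\otimes q_x=p\models q$, the conditioned family sums to $p\models q$ rather than to $\mathbbm{1}\in M_{dd'}(\mathbb{C})$, so $p|_q$ lies in $Q_{dd'}(\As)$ literally only when $p\models q=\mathbbm{1}$; otherwise one must either restrict to the range of $p\models q$ (your reading, which makes $p|_q$ a PVM on that subspace) or renormalise in some other way. Flagging this explicitly, rather than burying it, is the right call.
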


\noindent The rows of the following table correspond to scalars, states, predicates, and validity in three different effectuses (deterministic, probabilistic, quantum):
 
 \begin{table}[htp]
\begin{center}
\begin{tabular}{|c|c|c|c|}\hline
$\mathbf{Effectus}$&$\mathbf{Set}$&$\mathcal{K}(D)$&$\mathcal{K}(Q_d)$\\\hline
$1\to 1+1$&$b\in\{0,1\}$&$p\in[0,1]$&$E\in\mathrm{Proj}(d)$\\\hline
$1\to X$&$x\in X$&$(p_x)_{x\in X},~\sum p_x = 1$&$(E_x)_{x\in X},~\sum E_x = \mathbbm{1}$\\\hline
$X\to 1+1$&$S\subseteq X$&$f\colon X\to [0,1]$&$q\colon X\to \mathrm{Proj}(d)$\\\hline
$1\to X\to 1+1$&$x\in S$&$\sum p_x\cdot f(x)$&$\sum E_x\otimes q(x)$\\\hline
\end{tabular}
\end{center}
\label{default}
\end{table}%+
 
 \section*{Acknowledgements}
We thank Samson Abramsky, Rui Soares Barbosa,  Aleks Kissinger, Sandra Palau, Matteo Sammartino, and Fabio Zanasi, as well as anonymous referees for all the wise comments.

\bibliographystyle{eptcs}
\bibliography{paper63}
\end{document}